\title{A New PVSS Scheme with a Simple Encryption Function}
\author{Assia Ben Shil
\institute{LIP2\\ Tunis, Tunisia}
\institute{Faculty of Sciences of Tunis\\
University of El Manar}
\email{essia.benshil@gmail.com}\\
\\
\and 
Kaouther Blibech  \qquad\qquad\qquad Riadh Robbana \qquad\qquad\qquad Wafa Neji
\institute{ \qquad\qquad LIP2\\   \qquad\qquad Tunis, Tunisia}
\email{kaouther.blibech@gmail.com \qquad riadh.robbana@fst.rnu.tn \qquad  neji.wafa@yahoo.fr}}
\newtheorem{theorem}{Theorem}[section]
\newtheorem{lemma}[theorem]{Lemma}
\newenvironment{proof}[1][Proof]{\begin{trivlist}
\item[\hskip \labelsep {\bfseries #1}]}{\end{trivlist}}
\newenvironment{definition}[1][Definition]{\begin{trivlist}
\item[\hskip \labelsep {\bfseries #1}]}{\end{trivlist}}
\begin{document}
\maketitle
\begin{abstract}
A Publicly Verifiable Secret Sharing (PVSS) scheme allows anyone to verify the validity of the shares computed
and distributed by a dealer. The idea of PVSS was introduced by Stadler in
[18] where he presented a PVSS scheme based on Discrete Logarithm. Later,
several PVSS schemes were proposed. In [2], Behnad and Eghlidos present an
interesting PVSS scheme with explicit membership and disputation processes.
In this paper, we present a new PVSS having the advantage of being simpler
while offering the same features.
\end{abstract}

\section{Introduction}

A secret sharing scheme is a cryptographic method allowing splitting a
secret between a set of participants such that only some predefined subsets
of participants can recover the shared secret. These qualified subsets are
called access structures. A secret sharing scheme proceeds in two phases: a
dealing phase in which a dealer computes shares and gives to every
participant his own share and a reconstruction phase that consists in trying
to reconstruct the shared secret by pooling the elements of a qualified
subset of shares.

Secret sharing schemes were introduced firstly and independently by Shamir
[16] and Blakley [3]. The first scheme is based on polynomial interpolation
while the latter is based on hyperplane geometry. Most of the proposed secret
sharing schemes [1, 11] are based on Shamir's secret sharing scheme. Although
its efficiency, Shamir's scheme still presents some problems. In fact, there
is an absolute trust in the dealer. This latter, can distribute some
inconsistent shares leading the participants to recover a secret which
differs from the initial one. Verifiable Secret Sharing (VSS) schemes [5, 6,
13] were proposed to allow participants to verify the validity of the shares
they received from the dealer. However, a malicious shareholder can receive
a valid share but submit an invalid one in the reconstruction phase.
Publicly Verifiable Secret Sharing (PVSS) schemes [2, 4], [8-10], [14, 15], [17-21 ] were
proposed to solve this problem. In fact, PVSS schemes were proposed to
prevent cheating by the dealer or/and the shareholders. In a PVSS scheme,
the validity of the distributed shares can be verified by anyone.

In [2], Behnad and Eghlidos present an interesting PVSS scheme where
participants can prove their membership and the validity of their shares to
prevent unauthorized parties from participating in the reconstruction
process. Moreover, their scheme offers an explicit disputation process
aiming to prove to a third party in conflict situations between the dealer
and a participant who among them is lying.

In this paper, we present a new PVSS scheme providing a disputation and a
membership proof processes. We show that our PVSS scheme is simpler than the
PVSS scheme presented in [2] while still being as secure as the mentioned
scheme.

This paper is organized as follows: First, PVSS schemes are presented. After
that, our new PVSS scheme is introduced. Then, the security of our PVSS
scheme is studied and a comparison between it and the previous PVSS schemes
is done. Finally, we provide some concluding remarks.

\section{PVSS Schemes}

PVSS schemes as introduced by Stadler in [18] aim to allow anyone, not only
participants, to verify that shares were correctly distributed by the
dealer. This property has been defined by Stadler in [18] and has been
denoted public verifiability.

Stadler proposed in this paper, two PVSS schemes that can be used with
general access structures. The first one is used for sharing a discrete
logarithm. It requires a non standard assumption called DDLP
\textquotedblleft Double Discrete Logarithm Assumption\textquotedblright .
In fact, Stadler dealt with expressions of the form $y=g^{(h^{x})}$ (with $g$
a generator of a group of order $p$, and $h$ a fixed element of high order
in $Z_{p}^{\ast }$) such that given $y$, it is hard to find $x$. Under this
assumption, his scheme is as secure as the Decisional-Diffie-Hellman
problem. The second one is based on the RSA root problem. It is used for
sharing the $n$-th root and depends on the RSA assumption. Encryptions are
based on a variant of the Diffie-Hellman key-exchange protocol. But we
should notice here that the security of this scheme was not formally
studied. Moreover, the verification in these two schemes requires
information exchanges between the verifier and the shareholder. We say that
it is an interactive verification.

In [8], Fujisaki and Okamoto defined the non-interactivity for a PVSS scheme
as the fact that the verification of a share can be done without
communicating with the dealer or with any other participant. The scheme they
proposed in [8] depends on the \textquotedblleft modified RSA
assumption\textquotedblright\ assuming that inverting the RSA function is
still hard. This modified RSA assumption allows partial recovery.

Notice that the schemes of [8, 18] depend on some non standard assumptions.
However, Schoenmakers provided in [15] a stronger PVSS scheme by adding the
fact that when submitting his share, the shareholder must provide its
correctness proof. His PVSS scheme is simpler than the previous schemes. It
uses techniques working in any group for which the Discrete Logarithm
Problem is hard. This scheme is as hard to break as the Decisional
Diffie-Hellman problem.

In [20], Young and Yung proposed an improvement of Schoenmakers's PVSS scheme.
The scheme they proposed to share discrete logarithm is as hard as the
Discrete Logarithm Problem itself. They proved in [21] that their scheme is
computational zero-knowledge. In addition, in PVSS schemes, secure
encryption assumptions are employed. But in their scheme, Young and Yung can
use any probabilistic encryption function.

In [4], Boudot and Traor\'{e} proposed new PVSS schemes allowing shareholders
to recover their shares quickly (fast recovery) or after a predetermined
amount of computations (delayed recovery). In fact, they provide a PVSS
scheme for sharing discrete logarithm with fast recovery and a PVSS scheme
for sharing factorization with fast recovery. They also present a PVSS
scheme for sharing discrete logarithm with delayed recovery and a PVSS
scheme for sharing factorization with delayed recovery.

In most of the existing PVSS schemes, the verification phase is interactive.
This is due to the use of Fiat-Shamir zero knowledge protocol [7]. In [14],
Ruiz and Villar proposed a PVSS scheme with non interactive verification. It
is the first efficient PVSS that does not use the Fiat-Shamir technique. It
is based on the homomorphic properties of Paillier's encryption scheme [12].
It is the first known PVSS scheme based on the DCRA\footnote{%
The Decisional Composite Residuosity Assumption, used in the proof of the
Paillier cryptosystem, says that given an integer $z$ and a composite $n$,
it is hard to decide whether $z$ is a n-residue modulo n$^{2}$ or not.}
(Decisional Composite Residuosity Assumption). The verification process in
this scheme is simpler than in the other known schemes.

In [9] , Heidarvand and Villar proposed a new PVSS scheme based on pairing.
They took back the scheme of Shoenmakers using the pairing. The security of
this scheme is based on the DBSDH \footnote{%
Let $e:G_{1}$ $\ast $ $G_{1}\rightarrow G_{2}$ a bilinear application such
that $G_{1}$ and $G_{2}$ are two multiplicative group with the same order $p$%
. Let $g$ be a generator of $G_{1}$ and $a$, $b$ and $z$ elements of $%
Z_{p}^{\ast }$. The Decisional Bilinear Square Diffie-Hellman (DBSDH)
problem says that $g^{a}$, $g^{b}$ and $e(g,g)^{z}$ is hard to decide
whether $e(g,g)^{a^{2}b}=e(g,g)^{z}$.} problem (Decisional Bilinear Square
Diffie-Hellman problem). In [10], Jhanwar proposed a new non-interactive
PVSS scheme based on pairing. In this scheme, the dealer has not to compute
and to distribute the shares of a given secret; he provides a set of private
keys for participants. Then, every participant uses his private key, joined
to another public value to compute his share.

Recently, other PVSS schemes have been proposed. In [21], Yu and all proposed
a publicly verifiable secret sharing scheme with the possibility of
enrollment. In [19], Wu and all proposed a pairing based PVSS scheme reducing
the computation cost while keeping the same security level of the existing
public key systems.

Behnad and Eghlidos provided, in [2], a PVSS scheme with non interactive
verification and having two peculiarities. First, after distributing the
shares and in case of any complaint from any participant, a third party can
run a disputation process to identify who is lying. This third party can
then vote against the dealer or against the participant. Second, Behnad and
Eghlidos added a membership proof process in the beginning of the
reconstruction phase. In this phase a shareholder has to prove his
membership and the validity of his share at the same time. In [17], Ben
Shil, Blibech and Robbana proposed another PVSS scheme with a disputation
and a membership proof processes. In this scheme, rather than publishing the
encrypted coefficients of the polynomial used to compute the shares, the
encrypted shares are published. Thus, the set of shares is public and any
insertion or deletion will be detected by all the old participants. This
scheme is, then, recommended for applications where the number of
participants is limited while the access structure is dynamic and where it
is worthy to keep a track of any change in the set of participants.

In this paper we introduce a new PVSS scheme providing a non-interactive
verification process and presenting explicit disputation and membership
processes. We show that our PVSS scheme is simpler than the schemes proposed
in [2] and [17] while keeping the same level of security.

\section{A new PVSS scheme}

In our scheme, given two large prime numbers $p$ and $q$ such that $q|p-1$%
\footnote{%
q divides p-1.}, the following notations are used:

\begin{description}
\item[-] $G_{q}$ is a subgroup of prime order $q$ in $Z_{p}^{\ast }$, such
that computing discrete logarithm in this group is infeasible and $g$ $\in $ 
$G_{q}$ is a generator of the group.
\end{description}

In our PVSS scheme, we perform all the computations in $Z_{q}$.

\subsection{Dealing phase}

\subsubsection{Distribution process}

In the distribution process, the dealer sets $%
F(x)=F_{0}+F_{1}x+...+F_{k-1}x^{k-1}$, where $F_{1}$, \ldots , $F_{k-1}\in
_{R}\footnote{%
Randomly chosen.}Zq$ and $F_{0}$ is the secret to share. Moreover:

\begin{enumerate}
\item Every participant chooses a private key $a_{i}$ where $a_{i}\in _{R}Zq$
and publishes $g^{a_{i}}$ as his public key, for $1\leq i\leq n$ where $n$ is the number of participants.

\item The dealer $D$ computes the shares $s_{i}$ $=F(i)$, for $0\leq i\leq n$%
.

\item He publishes $C_{j}=g^{F_{j}}$, for $0\leq j\leq k-1$ and $g^{s_{i}}$, for $0\leq i\leq n$.

\item He sends an encrypted  
share $E_{i}=s_{i}\oplus (g^{a_{i}})^{s_{i}}$ to
the participant $Pr_{i}$, for $1\leq i\leq n$ (Notice that ${s_{0}}$ is the secret and thus there is no associated encrypted share to be sent to anyone).
\end{enumerate}

\subsubsection{Verification process}

Every shareholder $Pr_{i}$, computes $s_{i}=E_{i}\oplus \lbrack
(g^{s_{i}})^{a_{i}}]$, then, verifies the following equality\footnote{%
Given $C_{j}$ $=$ $g^{F_{j}}$, we compute:
\par
$\prod_{j=0}^{k-1}(C_{j})^{i^{^{_{j}}}}=$ $\prod_{j=0}^{k-1}(g^{F_{j}})^{i^{j}}=$ $\prod_{j=0}^{k-1}g^{F_{j}}\ast {i^{j}}=g$ $%
^{\sum_{j=0}^{k-1}F_{j}\ast i^{j}}=$ $g^{F(i)}=$ $g^{s_{i}}$,(since $%
s_{i}=F(i)$).}: $g^{s_{i}}=\prod_{j=0}^{k-1}(C_{j})^{i^{j}}$.
Otherwise, the shareholder complains against the dealer.

\subsubsection{Disputation process}

In the case of any complaint, both the dealer $D$\ and the shareholder $%
Pr_{i}$\ try to prove their honesty to a third party $R$. For doing that, $D$%
\ has to publish an encrypted value leading $Pr_{i}$\ to extract $g^{s_{i}}$%
\ and to verify the validity of the associated share $s_{i}$. If $D$\ sends
an invalid share, $Pr_{i}$\ has to prove this fact to $R$. This process is
done using the following protocol:

\begin{enumerate}
\item $Pr_{i}$ chooses his private key $a_{i}$ and publishes his public key $%
g^{a_{i}}$.

\item $Pr_{i}$ and $D$ publish independently\ $%
g^{[(g^{a_{i}})^{s_{i}}]^{-1}}. $ Then, $R$ verifies that $D$ and $Pr_{i}$
published the same value. Else, $Pr_{i}$ sends $a_{i}$ to $R$.  $R$ computes $g^{a_{i}}$ and $g^{[(g^{a_{i}})^{s_{i}}]^{-1}}$ in order to discover who is lying. Notice that $R$ can compute $g^{s_{i}}$ from the
published values $g^{s_{i}}$ $=\prod_{j=0}^{k-1}(C_{j})^{i^{^{_{j}}}}$.\item $D$ computes and publishes $\lambda =s_{i}\oplus (g^{a_{i}})^{s_{i}}$%
.\ \ \ \ 

\item $Pr_{i}$ computes $\alpha $ $=$\ $\lambda \oplus (g^{a_{i}})^{s_{i}}$.
If $g^{\alpha }$ $=$ $\prod_{j=0}^{k-1}(C_{j})^{i^{^{_{j}}}}$, he sends a
commitment to $R$ and the disputation process is stopped. Else, he sends $%
\alpha $ to $R$.

\item $R$ computes $g^{\alpha }$ and verifies that $g^{\alpha }\neq
\prod_{j=0}^{k-1}(C_{j})^{i^{^{_{j}}}}$. Then, he verifies that $%
g^{1/(\lambda \oplus \alpha )}=g^{[(g^{a_{i}})^{s_{i}}]^{-1}}$. If it holds, 
$D$ lied else $Pr_{i}$ lied.
\end{enumerate}

\subsection{Reconstruction phase}

\subsubsection{Membership only proof}

If a verifier wants to verify that $Pr_{i}$ is an authorized participant,
this latter has to prove his membership to the verifier without revealing
his share. Our membership proof is the following:

\begin{enumerate}
\item The verifier chooses $a\in _{R}Zq$ and sends $g^{a}$ to the prover.

\item The prover sends $R_{P}=g^{[(g^{a})^{s_{i}}]^{-1}}$ to the verifier.

\item The verifier computes $R_{V}=g^{[(g^{s_{i}})^{a}]^{-1}}$ ($g^{s_{i}}$ $%
=$ $\prod_{j=0}^{k-1}(C_{j})^{i^{^{_{j}}}}$).

\item If $R_{V}=R_{P}$, the prover is the shareholder who possesses the
share $s_{i}$.
\end{enumerate}

\subsubsection{Pooling the shares}

The secret is reconstructed from the submitted shares, as follows: $%
s=\sum_{i=1}^{k}w_{i}s_{i}$ where $w_{i}=\sum_{i\neq j}i/(j-1)$.

Notice that the shares can be submitted using the same encryption function
of the distribution process $(E_{i}=s_{i}\oplus (g^{s_{i}})^{a})$ where $a$
is the private key of the party concerned by the reconstruction of the
secret and $g^{a}$ is its public key.

Notice also that this party does not need to run the membership process
before the pooling phase since using this encryption function allows the
verification of a share and its extraction at the same time.

\section{Security}

In this section, we prove the security properties of our PVSS scheme. First
of all, we provide our definition of a secure PVSS scheme:

\begin{definition}
A PVSS scheme is secure if and only if:

- During the dealing phase, neither the dealer $D$ can cheat by sending an
invalid share to a given participant $Pr_{i}$, nor the participant $Pr_{i}$
can claim that he received a non valid share while it was.

- During the reconstruction phase, an unauthorized party cannot pretend to
be a shareholder.

- During all the stages of the scheme, the secrecy property is verified.
\end{definition}

Let's prove at first that, in our scheme, the dealer $D$ cannot cheat by
sending an invalid share to the participant $Pr_{i}$. We show here that $%
Pr_{i}$ can prove this fact to the third party $R$ in the disputation phase.
Thus, we prove the following lemma:

\begin{lemma}
\textquotedblleft The dealer $D$ cannot cheat by sending an invalid share to
the participant $Pr_{i}$\textquotedblright .
\end{lemma}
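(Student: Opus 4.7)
The plan is to show that whenever $D$ distributes an $E_i$ that does not decrypt to the share $s_i = F(i)$ fixed by the public commitments, the disputation process will expose him. The key invariant is that the published $C_j = g^{F_j}$ uniquely determine $g^{s_i} = \prod_{j=0}^{k-1}(C_j)^{i^j}$, so the ``correct'' share value is pinned down in the exponent; under the discrete logarithm assumption in $G_q$, no alternative $s_i' \neq s_i$ can masquerade as it. With this invariant in place, detection is immediate: in the verification phase $Pr_i$ forms $s_i' = E_i \oplus (g^{s_i})^{a_i}$ and checks $g^{s_i'} = \prod_{j=0}^{k-1}(C_j)^{i^j}$, which fails precisely because $E_i \neq s_i \oplus (g^{a_i})^{s_i}$, so $Pr_i$ legitimately triggers a complaint.

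The heart of the argument is a case analysis of the disputation protocol. In step 2 both parties independently publish $g^{[(g^{a_i})^{s_i}]^{-1}}$; if their values disagree, $Pr_i$ reveals $a_i$ and $R$ recomputes directly to convict the liar, so I may assume $D$ publishes the correct common value. In step 3 $D$ publishes some $\lambda$, leading to two cases: (i) $\lambda = s_i \oplus (g^{a_i})^{s_i}$, in which case $Pr_i$ recovers $\alpha = s_i$, the check $g^\alpha = \prod_{j=0}^{k-1}(C_j)^{i^j}$ holds, and the dispute stops --- but then $Pr_i$ has actually received the correct share, so $D$ has effectively abandoned the cheat; (ii) $\lambda \neq s_i \oplus (g^{a_i})^{s_i}$, so $\alpha = \lambda \oplus (g^{a_i})^{s_i} \neq s_i$ and $Pr_i$ forwards $\alpha$ to $R$. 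In case (ii) both of $R$'s tests indict $D$: $g^\alpha \neq \prod_{j=0}^{k-1}(C_j)^{i^j}$ since $\alpha \neq s_i$, and $\lambda \oplus \alpha = (g^{a_i})^{s_i}$ by XOR cancellation, so $g^{1/(\lambda \oplus \alpha)}$ matches the value both parties committed to in step 2, and $R$ concludes that $D$ lied.

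The main obstacle I foresee is sealing the loophole where $D$ might engineer a ``tailored'' $\lambda$ that passes both of $R$'s checks while still corresponding to an invalid original transmission. Here I plan to exploit the rigidity of the second check: combined with the step-2 commitment to $g^{[(g^{a_i})^{s_i}]^{-1}}$, it forces $\lambda \oplus \alpha = (g^{a_i})^{s_i}$ for $Pr_i$'s honestly computed $\alpha$, which leaves $D$ no freedom --- every nontrivial choice of $\lambda$ collapses into either case (i), where the cheat is withdrawn, or case (ii), where the cheat is exposed. A secondary concern is strategic deviation by $D$ in steps 1--2, such as publishing a wrong $g^{[(g^{a_i})^{s_i}]^{-1}}$, but these are handled by the explicit fallback of revealing $a_i$, which reduces verification to a direct computation by $R$ under the discrete log hardness in $G_q$.
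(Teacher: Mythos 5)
Your proposal is correct and follows essentially the same route as the paper: both arguments reduce to the observation that any $\lambda \neq s_{i}\oplus (g^{a_{i}})^{s_{i}}$ forces the honest participant's $\alpha =\lambda \oplus (g^{a_{i}})^{s_{i}}$ to differ from $s_{i}$, so that $R$'s first check fails while $\lambda \oplus \alpha =(g^{a_{i}})^{s_{i}}$ makes the second check succeed and convicts $D$. The only difference is presentational --- your clean dichotomy on $\lambda$ (plus the step-2 fallback) subsumes the paper's explicit enumeration of seven malicious choices of $\lambda$.
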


\begin{proof}
In the disputation phase, a honest dealer has to compute $\lambda
=s_{i}\oplus (g^{a_{i}})^{s_{i}}$. But a malicious dealer can have another
behavior. In fact, he can compute $\lambda $ using an invalid share $%
s_{i}^{\prime }$ or an incorrect value $g^{a_{i}^{\prime }}$ rather than the
public key $g^{a_{i}}$ of the participant $Pr_{i}$.

So, there are seven values of $\lambda $ that $D$ can use: $\lambda
=s_{i}^{\prime }\oplus (g^{a_{i}})^{s_{i}}$ or $\lambda =s_{i}\oplus
(g^{a_{i}})^{s_{i}^{\prime }}$ or $\lambda =s_{i}^{\prime }\oplus
(g^{a_{i}})^{s_{i}^{\prime }}$ or $\lambda =s_{i}\oplus (g^{a_{i}^{\prime
}})^{s_{i}}$ or $\lambda =s_{i}^{\prime }\oplus (g^{a_{i}^{\prime
}})^{s_{i}} $ or $\lambda =s_{i}\oplus (g^{a_{i}^{\prime }})^{s_{i}^{\prime
}}$ or $\lambda =s_{i}^{\prime }\oplus (g^{a_{i}^{\prime }})^{s_{i}^{\prime
}}$.

In each of these cases, $Pr_{i}$ will compute $\alpha $ $=$\ $\lambda \oplus
(g^{s_{i}})^{a_{i}}$ at step $3$ of the disputation process, and since $%
\lambda \neq s_{i}\oplus (g^{a_{i}})^{s_{i}}$, he will find $\alpha $ $\neq $
$s_{i}$ and he will send this value to $R$.

$R$ will verify that $g^{\alpha }\neq
\prod_{j=0}^{k-1}(C_{j})^{i^{^{_{j}}}}$ and that $g^{1/(\lambda \oplus
\alpha )}=g^{1/(\lambda \oplus \lambda \oplus
(g^{s_{i}})^{a_{i}})}=g^{[(g^{a_{i}})^{^{s_{i}}}]^{-1}}$. So $R$ will
conclude that $D$ lied.
\end{proof}

We prove also that, in our scheme, a malicious behavior of a participant $%
Pr_{i}$, who received a valid share from the dealer $D$, but claims that his
share is invalid, will be detected. We show here that, in the disputation
phase, the dealer $D$ can prove to a third party $R$ that $Pr_{i}$ cheated. Thus, we prove the following lemma:

\begin{lemma}
\textquotedblleft The participant $Pr_{i}$, cannot claim that he received a
non valid share while it was\textquotedblright .
\end{lemma}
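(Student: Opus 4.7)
The plan is to enumerate the possible cheating strategies a malicious participant $Pr_{i}$ could adopt during the disputation process, given that the dealer $D$ is honest (i.e., $E_{i}=s_{i}\oplus (g^{a_{i}})^{s_{i}}$ was correctly formed), and to show that in every case the third party $R$ identifies $Pr_{i}$ as the liar. First I would fix what an honest $D$ does: since $D$ knows the true share $s_{i}=F(i)$, both the value $g^{[(g^{a_{i}})^{s_{i}}]^{-1}}$ published at step $2$ and the value $\lambda =s_{i}\oplus (g^{a_{i}})^{s_{i}}$ published at step $3$ are the correct ones. I would also emphasize that the public key $g^{a_{i}}$ was committed to in the distribution process, so $Pr_{i}$ cannot silently swap $a_{i}$ for another $a_{i}^{\prime}$: as soon as $R$ is given a private key, $R$ can test it against the posted $g^{a_{i}}$.

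The case analysis then has three branches. Case~1: $Pr_{i}$ publishes at step~$2$ a value different from $D$'s. Then $Pr_{i}$ must send $a_{i}$ to $R$; using the already-public $g^{s_{i}}=\prod_{j=0}^{k-1}(C_{j})^{i^{j}}$, $R$ recomputes both $g^{a_{i}}$ (checking consistency with the posted public key) and the correct $g^{[(g^{a_{i}})^{s_{i}}]^{-1}}=g^{[(g^{s_{i}})^{a_{i}}]^{-1}}$, and sees that $Pr_{i}$'s published value was wrong. Case~2: $Pr_{i}$ publishes the correct value at step~$2$ and then honestly computes $\alpha =\lambda \oplus (g^{a_{i}})^{s_{i}}$. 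Because $D$ is honest, $\alpha =s_{i}$, so $g^{\alpha }=\prod_{j=0}^{k-1}(C_{j})^{i^{j}}$, the check at step~$4$ passes, and $Pr_{i}$ is forced to send the commitment instead of a complaint. Case~3: $Pr_{i}$ publishes the correct value at step~$2$ but forwards some $\alpha^{\prime}\neq s_{i}$ to $R$. Then $R$'s final test $g^{1/(\lambda \oplus \alpha^{\prime })}\stackrel{?}{=}g^{[(g^{a_{i}})^{s_{i}}]^{-1}}$ amounts to checking whether $\lambda \oplus \alpha^{\prime }=(g^{a_{i}})^{s_{i}}$; since $\lambda =s_{i}\oplus (g^{a_{i}})^{s_{i}}$, this collapses to $\alpha^{\prime }=s_{i}$, contradicting the assumption, so the test fails and $R$ rules against $Pr_{i}$.

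The main obstacle I expect is Case~3: one has to argue carefully that the two conditions a successful cheat must fulfill simultaneously, namely $g^{\alpha^{\prime }}\neq \prod_{j=0}^{k-1}(C_{j})^{i^{j}}$ (otherwise $Pr_{i}$ is forced into the commitment branch at step~$4$) and $g^{1/(\lambda \oplus \alpha^{\prime })}=g^{[(g^{a_{i}})^{s_{i}}]^{-1}}$ (needed to turn the final verdict against $D$), are jointly unsatisfiable once $\lambda$ is the honest value. The argument reduces to the observation that xoring the honest $\lambda$ with any $\alpha^{\prime }\neq s_{i}$ never reproduces $(g^{a_{i}})^{s_{i}}$, and that $g^{\alpha^{\prime }}=g^{s_{i}}$ implies $\alpha^{\prime }=s_{i}$ in $Z_{q}$. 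The only subtle bookkeeping is to keep the two algebras straight, bitwise xor on the mask $(g^{a_{i}})^{s_{i}}$ versus exponentiation in $G_{q}$, but both the protocol description and the preceding Lemma already rely on this convention, so no new assumption is introduced.
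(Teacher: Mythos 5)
Your proposal is correct, and its core (your Case~3) is exactly the argument the paper gives: for an honest $\lambda =s_{i}\oplus (g^{a_{i}})^{s_{i}}$, any forwarded $\alpha ^{\prime }\neq s_{i}$ makes $R$'s final test $g^{1/(\lambda \oplus \alpha ^{\prime })}=g^{[(g^{a_{i}})^{s_{i}}]^{-1}}$ fail, so $R$ rules against $Pr_{i}$. Where you differ is in scope: the paper's proof considers only the single cheating strategy of sending a fake $\alpha ^{\prime }$ at step~$4$, whereas you additionally close off the step-$2$ avenue (publishing a wrong $g^{[(g^{a_{i}})^{s_{i}}]^{-1}}$, which forces $Pr_{i}$ to reveal $a_{i}$ and lets $R$ check it against the posted public key) and record explicitly that an honest computation of $\alpha $ forces $Pr_{i}$ into the commitment branch. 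This fuller case analysis is a genuine improvement in completeness over the printed proof, and your closing observation that the two success conditions of a cheat ($g^{\alpha ^{\prime }}\neq \prod_{j=0}^{k-1}(C_{j})^{i^{j}}$ and $\lambda \oplus \alpha ^{\prime }=(g^{a_{i}})^{s_{i}}$) are jointly unsatisfiable is the cleanest way to state why Case~3 works. The only caveat, which you already flag and which the paper shares, is the unexamined mixing of bitwise xor with arithmetic modulo $q$ (e.g.\ the implicit claim that $\lambda \oplus \alpha ^{\prime }\not\equiv (g^{a_{i}})^{s_{i}}\bmod q$ whenever the xor differs as a bit string); neither proof resolves this, so it is not a gap relative to the paper's own standard.
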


\begin{proof}
In the disputation phase, if a participant $Pr_{i}$ received a correct share $s_{i}$ but
claims that he received an invalid one, he has to send a fake value $\alpha
^{\prime }$ to $R$. In fact, $Pr_{i}$ computes $\alpha =\lambda \oplus
(g^{s_{i}})^{a_{i}}$ but sends $\alpha ^{\prime }$ $\neq \alpha $ to $R$.
So, $R$ computes, $g^{\alpha ^{\prime }}$ and verifies that it is not a
public value. Then, $R$ verifies, at step $5$ of the disputation process,
that $g^{1/(\lambda \oplus \alpha ^{\prime })}\neq g$ $%
^{[(g^{a_{i}})^{s_{i}}]^{-1}}$. Since it does hold, $R$ concludes that $%
Pr_{i}$ lied.
\end{proof}

In addition, we prove the following lemma:

\begin{lemma}
\textquotedblleft Under the Computational Diffie-Hellman assumption, it is
infeasible to break the encryption of the shares\textquotedblright .
\end{lemma}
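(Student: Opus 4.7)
The plan is to argue by reduction: assuming an efficient adversary $\mathcal A$ that recovers the share $s_i$ from the public transcript of the distribution process, I construct an efficient algorithm that solves the Computational Diffie-Hellman problem on $G_q$. I would first enumerate what $\mathcal A$ actually has in hand, namely the generator $g$, the coefficient commitments $C_0,\dots,C_{k-1}$ (from which $g^{s_i}=\prod_{j=0}^{k-1}C_j^{i^j}$ is derivable), the public key $g^{a_i}$, and the ciphertext $E_i=s_i\oplus(g^{a_i})^{s_i}$. Everything else the dealer publishes is either a function of these values or independent randomness associated to other participants, and so can be simulated with fresh uniform choices.

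The next step is the key structural remark: $E_i$ is a one-time-pad-style encryption whose mask is the group element $g^{a_is_i}$. Because $\oplus$ is self-inverse, $s_i=E_i\oplus g^{a_is_i}$, so recovering $s_i$ is information-theoretically equivalent to recovering the mask. Given the public values $g^{a_i}$ and $g^{s_i}$, recovering $g^{a_is_i}$ is by definition the CDH problem on the instance $(g,g^{a_i},g^{s_i})$.

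Turning this into a formal reduction: on a CDH challenge $(g,A,B)\in G_q^3$, I would embed $g^{a_i}\gets A$ and $g^{s_i}\gets B$. To build a consistent transcript, I pick $k-1$ of the commitments $C_j$ uniformly at random in $G_q$ and solve the single linear equation $\prod_{j=0}^{k-1}C_j^{i^j}=B$ for the remaining one; since $B$ is uniform, the resulting $C_j$'s are jointly uniform, exactly as in a real execution. I then hand $\mathcal A$ these data together with a uniformly random bit string $E_i$; if $\mathcal A$ returns $s_i$, I output $E_i\oplus s_i$ as the CDH answer.

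The hardest part will be justifying that substituting a uniformly random $E_i$ for a genuinely computed ciphertext does not degrade $\mathcal A$'s success probability. The intuition is sound: in the real scheme, $a_i$ is uniform in $Z_q$, so for $s_i\neq 0$ (which holds with overwhelming probability) the mask $(g^{a_i})^{s_i}$ is uniformly distributed over $G_q$, and after XOR with $s_i$ the ciphertext is statistically close to uniform on its bit-encoding. Under the mild assumption that the encoding of $G_q$ into bit strings is nearly regular, the simulation and the real view are statistically indistinguishable, and $\mathcal A$'s advantage transfers to the CDH solver, contradicting the CDH assumption. Depending on how group elements are encoded, one might need to strengthen the hypothesis slightly (for example, by hashing the mask before XOR) to make this last step fully rigorous.
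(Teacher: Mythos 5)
Your core observation --- that $E_i$ is a one-time-pad whose mask is $g^{a_is_i}$, so recovering $s_i$ from a genuine transcript is equivalent to recovering the Diffie--Hellman value of $(g,g^{a_i},g^{s_i})$ --- is exactly the paper's argument; the paper stops there and simply asserts the implication, whereas you try to upgrade it to a genuine reduction. That is the right instinct, but the reduction as you describe it has two concrete problems. First, the claim that the simulated view (with $E_i$ replaced by a uniform string) is \emph{statistically} indistinguishable from the real view is false. Conditioned on the values $g^{a_i}$ and $g^{s_i}$ that the adversary sees, the exponents $a_i$ and $s_i$ are information-theoretically determined, hence so is the real ciphertext $E_i=s_i\oplus g^{a_is_i}$: it is a point mass, not close to uniform. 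The averaging argument over a uniform $a_i$ does not help, because the adversary conditions on $g^{a_i}$. The gap between the real and simulated views can only be bridged \emph{computationally}, which requires a decisional assumption (DDH, or a hashed/oracle Diffie--Hellman variant, or the random oracle model if the mask is hashed before the XOR). Your closing caveat gestures at this, but it is not a minor encoding issue to be patched later; it is the entire content of the lemma, and under CDH alone for the scheme as written the reduction does not go through.

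Second, the reduction's final output is wrong in the world where it is actually run. You output $E_i\oplus s_i$, which equals $g^{a_is_i}$ only when $E_i$ is the \emph{real} ciphertext; in your simulation $E_i$ is an independent uniform string, so $E_i\oplus s_i$ is garbage. Since the adversary has handed you $s_i=\log_g B$, you should instead output $A^{s_i}$, which is the CDH answer regardless of how $E_i$ was fabricated. This is an easy fix, but as written the algorithm does not solve CDH even if the simulation issue were resolved. For comparison, the paper's own proof avoids both difficulties only by never constructing a reduction: it asserts that decrypting ``implies'' computing $g^{a_is_i}$, which presumes the adversary must explicitly compute the mask and ignores partial leakage from XOR-ing a field element with a group element. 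Your attempt is more honest and, in being so, exposes that the lemma needs either a stronger assumption or a modified encryption function to be provable.
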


\begin{proof}
Breaking the encryption of the shares is equivalent to computing $s_{i}$
from the encrypted share $E_{i}=s_{i}\oplus (g^{a_{i}})^{s_{i}}$. 

To be able to do that, we have to compute $s_{i}=E_{i}\oplus (g^{a_{i}})^{s_{i}}$ from
the inputs $E_{i}$, $g^{a_{i}}$, $g^{s_{i}}$. This implies computing $%
g^{a_{i}\ast s_{i}}$ given $g^{a_{i}}$ and $g^{s_{i}}$. 

Recall that the Computational Diffie-Hellman assumption states that it is infeasible to
compute $g^{a_{i}\ast s_{i}}$ given $g^{a_{i}}$ and $g^{s_{i}}$. Therefore
the unauthorized party is not able to compute the share $s_{i}$.

Furthermore, to break the encryption of a share $s_{i}$, the adversary
should be able to compute $s_{i}$ from $g^{s_{i}}$. This implies solving the
Discrete Logarithm Problem. 

Given that computing the discrete log in $G_{q}$
is infeasible, the unauthorized party is not able to compute $s_{i}$ from $%
g^{s_{i}}$.
\end{proof}

Then, we prove the following lemma:

\begin{lemma}
\textquotedblleft Under the Computational Diffie-Hellman assumption, an
unauthorized party cannot extract the share $s_{i}$ from $g^{a_{i}}$, $%
g^{s_{i}}$ and the published masked value $\lambda$ in the
disputation process\textquotedblright .
\end{lemma}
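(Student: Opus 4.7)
The plan is to reduce this lemma to the preceding one (about breaking the encryption of the shares) via a direct structural argument, since the masked value $\lambda$ is algebraically identical to the encrypted share $E_i$.

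First I would observe that $\lambda = s_i \oplus (g^{a_i})^{s_i}$ has exactly the same form as the ciphertext $E_i = s_i \oplus (g^{a_i})^{s_i}$ of the distribution phase. Consequently, any adversary that recovers $s_i$ from $(g^{a_i}, g^{s_i}, \lambda)$ also recovers $s_i$ from $(g^{a_i}, g^{s_i}, E_i)$ by simply renaming the inputs, contradicting the previous lemma.

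Second, to spell out the reduction to CDH, I would argue that recovering $s_i$ from $\lambda$ is equivalent to recovering the mask $m := (g^{a_i})^{s_i} = g^{a_i \cdot s_i}$, because $s_i = \lambda \oplus m$. Producing $g^{a_i \cdot s_i}$ from $g^{a_i}$ and $g^{s_i}$ is precisely the Computational Diffie-Hellman problem in $G_q$ and is therefore infeasible under the stated assumption. The complementary attack of reading $s_i$ off $g^{s_i}$ directly is ruled out by the hardness of Discrete Logarithm in $G_q$, which is a standing assumption of the scheme.

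The main obstacle is the auxiliary public value $g^{[(g^{a_i})^{s_i}]^{-1}}$ broadcast at step~2 of the disputation process, which is not included in the lemma's input list but is in fact visible to any adversary. To make the reduction airtight I would need to show that this extra value does not help recover the mask $m$: intuitively, knowing $g^{m^{-1}}$ would yield $m$ only via a further Discrete Logarithm computation (inverting the ``exponent-of-inverse'' map), so it gives no efficient handle on $g^{a_i s_i}$. This is the subtlest point of the argument; under plain CDH alone the case is slightly delicate, and a careful write-up might either invoke a gap/oracle variant of CDH or explicitly note that the step-2 value is itself of the form ``$g$ raised to a quantity whose discrete log is hard'', so it adds no usable advantage beyond what CDH already forbids.
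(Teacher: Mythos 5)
Your proposal is correct and takes essentially the same route as the paper: both arguments reduce the recovery of $s_i$ from $\lambda = s_i \oplus (g^{a_i})^{s_i}$ to computing the mask $g^{a_i \cdot s_i}$ from $g^{a_i}$ and $g^{s_i}$, which is precisely the Computational Diffie-Hellman problem (and the paper's proof is no more formal than yours about why computing the mask is the \emph{only} avenue of attack). The auxiliary step-2 value $g^{[(g^{a_i})^{s_i}]^{-1}}$ that you flag as the subtle point is simply not considered in the paper's proof of this lemma; the authors instead treat it in the immediately following, separate lemma.
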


\begin{proof}
To extract the share $s_{i}$, the adversary has to compute $s_{i}$ from the
public masked value $\lambda =s_{i}\oplus g^{a_{i}\ast s_{i}}$. This implies
that he needs to compute $s_{i}=\lambda \oplus $ $g^{a_{i}\ast s_{i}}$ given 
$\lambda $, $g^{a_{i}}$ and $g^{s_{i}}$. 

For doing that, the adversary
should be able to compute $g^{a_{i}\ast s_{i}}$ from the inputs $%
g^{a_{i}}$ and $g^{s_{i}}$. However, the adversary is not able to compute $%
s_{i}$ due to the Computational Diffie-Hellman assumption.
\end{proof}

We prove also that:

\begin{lemma}
\textquotedblleft Under the Computational Diffie-Hellman assumption, an unauthorized
party cannot retrieve the share $s_{i}$ from $g^{a_{i}}$, $g^{s_{i}}$%
and $g^{[(g^{s_{i}})^{a_{i}}]^{-1}}$\textbf{\ }in the two first steps of the
disputation process\textquotedblright .
\end{lemma}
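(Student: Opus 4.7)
The plan is to mirror the reduction used in Lemma 5 and reduce the extraction of $s_{i}$ to the Computational Diffie-Hellman assumption. The adversary is given $g^{a_{i}}$, $g^{s_{i}}$, and the additional value $R = g^{[(g^{s_{i}})^{a_{i}}]^{-1}}$, and must output $s_{i}$. First I would observe that, as in the previous lemmas, the only routes to $s_{i}$ are either to compute $s_{i}$ directly from $g^{s_{i}}$, which is the discrete logarithm problem in $G_{q}$ and thus infeasible, or to compute the masking value $g^{a_{i} s_{i}}$ so that $s_{i}$ can be peeled off from the encrypted share $E_{i}$. Recovering $g^{a_{i} s_{i}}$ from $g^{a_{i}}$ and $g^{s_{i}}$ alone is exactly a CDH instance, so stripped of the extra value $R$, the statement reduces to Lemma 5.

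The remaining work is to argue that $R$ gives no additional advantage. I would do this in two steps. First, recovering $g^{a_{i} s_{i}}$ from $R$ directly would require inverting the map $x \mapsto g^{x^{-1}}$ in $G_{q}$, which is at least as hard as the discrete logarithm problem in $G_{q}$. Second, any attempt to combine $R$ with $g^{a_{i}}$ and $g^{s_{i}}$ to form $g^{a_{i} s_{i}}$ would need to exploit some algebraic relation among the three exponents in $Z_{q}$, but the exponent of $R$ is $(g^{a_{i} s_{i}})^{-1}$ mod $q$, whose connection to $a_{i} s_{i}$ itself is mediated only through a discrete logarithm computation in $G_{q}$. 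So in both cases the adversary is blocked by either DL or CDH.

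The main obstacle I expect to face is making the independence argument for $R$ fully rigorous. Handing the adversary a function of the CDH target changes the problem distribution, so a clean reduction must either show that $R$ can be simulated without knowing $g^{a_{i} s_{i}}$, or appeal to a stronger assumption under which the distribution of $R$ is indistinguishable from random to the solver. I anticipate that the proof, in the informal style of the preceding lemmas in this paper, will sketch the reduction to CDH rather than produce a tight cryptographic reduction; that informality is acceptable here, but it is genuinely where the subtlety lives, and any rigorous strengthening of the lemma would have to address it head-on.
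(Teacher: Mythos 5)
Your proposal takes essentially the same route as the paper: the paper's own proof is just two clauses, asserting that Discrete Logarithm hardness prevents extracting $g^{a_{i}\ast s_{i}}$ from $g^{[(g^{s_{i}})^{a_{i}}]^{-1}}$ and that CDH prevents recovering $s_{i}$ from $g^{a_{i}}$ and $g^{s_{i}}$, which is precisely your two-pronged argument in compressed form. The independence-of-$R$ subtlety you flag (that a rigorous reduction must simulate $R$ or strengthen the assumption) is real, but the paper does not address it at all, so your treatment is if anything more careful than the original.
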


\begin{proof}
Under the assumption that computing Discrete Logarithm in $G_{q}$ is hard,
an unauthorized party cannot extract $g^{a_{i}\ast s_{i}}$ from $%
g^{[(g^{s_{i}})^{a_{i}}]^{-1}}$ and under the Computational Diffie-Hellman
assumption, it is not possible to retrieve $s_{i}$ from $g^{s_{i}}$and $%
g^{a_{i}}$.
\end{proof}

Moreover, we prove that:

\begin{lemma}
\textquotedblleft Under the Computational Diffie-Hellman assumption, an
unauthorized party cannot pretend to be a shareholder\textquotedblright .
\end{lemma}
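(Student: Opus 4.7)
The plan is to reduce impersonation in the membership protocol to either the Discrete Logarithm Problem in $G_q$ or the Computational Diffie--Hellman problem, in the same informal style as the preceding lemmas. An unauthorized party playing the prover sees all public data from the dealing phase---the commitments $C_j$, hence $g^{s_i} = \prod_{j=0}^{k-1} C_j^{i^j}$, and every public key $g^{a_\ell}$---together with the verifier's freshly chosen challenge $g^a$; to be accepted, the party must output exactly $R_P = g^{[(g^a)^{s_i}]^{-1}}$, which the verifier recomputes as $R_V$.

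Next I would argue that there are only two avenues to this output. The first is to recover $s_i$ from $g^{s_i}$ (or from the commitments) and then run the honest prover's computation; this is the discrete logarithm problem in $G_q$, already excluded by the argument of Lemma 4.4. The second is to compute the intermediate group element $g^{a s_i}$ directly from the pair $(g^a, g^{s_i})$, invert its integer representative modulo $q$, and raise $g$ to that inverse; the first of these is precisely a CDH instance in $G_q$, and the remaining operations are public. Putting the two together reduces impersonation to DLP or CDH, and the conclusion follows from the stated CDH assumption together with the DLP hardness already in force in $G_q$.

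The main obstacle is the tightness of the second reduction: producing $R_P$ from $g^{a s_i}$ is trivial, but the reverse direction---extracting $g^{a s_i}$ from a successful $R_P$---itself resembles a fresh discrete logarithm step. I would address this by observing that the challenge $a$ is freshly and independently sampled in every run, so any efficient algorithm outputting a correct $R_P$ on a random $g^a$ must depend on $g^{a s_i}$ in an essentially invertible way; a DLP-style reduction on $R_P$ then converts such an algorithm into a CDH solver, allowing me to close the argument in the same informal register as the preceding lemmas and to conclude that no unauthorized party can pretend to be the shareholder $Pr_i$.
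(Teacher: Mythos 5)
Your argument matches the paper's own proof, which likewise reduces impersonation in the membership protocol to computing $(g^{a})^{s_{i}}$ from $g^{a}$ and $g^{s_{i}}$, i.e.\ to a CDH instance (the paper states this in a single sentence, writing $a_{i}$ where the protocol actually uses the verifier's fresh challenge $a$). The tightness worry you raise in your final paragraph --- that extracting $g^{as_{i}}$ from a successful response $R_{P}=g^{[(g^{a})^{s_{i}}]^{-1}}$ is itself nontrivial --- is a genuine gap, but it is equally present and entirely unaddressed in the paper's proof, so your proposal follows essentially the same route while being more candid about its informality.
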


\begin{proof}
This feature is fulfilled within the membership process. In this process, to
pretend to be the shareholder possessing $s_{i}$, the unauthorized party
should be able to compute $(g^{a_{i}})^{s_{i}}$ from the values $g^{a_{i}}$
and $g^{s_{i}}$ in the membership process. However, under the Computational
Diffie-Hellman assumption, this is infeasible.
\end{proof}

Finally, we prove that:

\begin{lemma}
\textquotedblleft Under the Computational Diffie-Hellman assumption, it is
infeasible to break the encryption of the shares submitted in the
reconstruction phase\textquotedblright .
\end{lemma}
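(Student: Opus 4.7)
The plan is to mirror the argument already used in Lemma 4.3, since the reconstruction phase uses the very same encryption function $E_i = s_i \oplus (g^{s_i})^a$, only with the reconstructing party's key pair $(a, g^a)$ in place of the shareholder's $(a_i, g^{a_i})$. Accordingly, I would open by pointing out that the public inputs visible to an adversary during share submission are exactly $E_i$, $g^{s_i}$ (which anyone can recompute as $\prod_{j=0}^{k-1}(C_j)^{i^j}$), and $g^a$, so the adversary's task is structurally identical to the one analysed in Lemma 4.3.

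Next, I would carry out the reduction in two steps. First, I would observe that recovering $s_i$ from $E_i$ requires knowing the mask $(g^{s_i})^{a} = g^{a \cdot s_i}$, since $s_i = E_i \oplus g^{a \cdot s_i}$; producing this mask from the available $g^a$ and $g^{s_i}$ is precisely a CDH instance, so a successful attacker yields a CDH solver. Second, I would rule out the alternative route of bypassing the mask entirely by extracting $s_i$ directly from $g^{s_i}$; this is a discrete logarithm in $G_q$, which is assumed infeasible and, in any case, would also break CDH. Combining the two, the only avenues open to the adversary are blocked by the stated assumption.

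The only subtlety worth flagging is that in the reconstruction phase the same $g^{s_i}$ may be paired with many different public keys $g^a$ across multiple reconstructing parties, so one should check that giving the adversary several transcripts $(g^{a^{(\ell)}}, E_i^{(\ell)})$ for the same $s_i$ does not weaken the reduction. I would dispatch this by noting that each transcript is itself a fresh CDH-style instance with an independent random $a^{(\ell)}$, so the set of transcripts does not help: any algorithm turning them into $s_i$ still yields $g^{a \cdot s_i}$ for the relevant pair, contradicting CDH. With that remark in place, the conclusion that an unauthorised party cannot break the reconstruction-phase encryption follows immediately.
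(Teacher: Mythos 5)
Your proposal follows essentially the same argument as the paper: the adversary sees only $E_i$, $g^a$ and $g^{s_i}$, so recovering the mask $(g^a)^{s_i}$ is exactly a CDH instance, which is the paper's entire proof. Your additional observations (the direct discrete-log route via $g^{s_i}$, and the multi-transcript concern) go slightly beyond what the paper writes for this lemma but are consistent with its treatment of the analogous distribution-phase lemma and do not change the approach.
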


\begin{proof}
In the reconstruction phase, only the party possessing the private key $a$\
can extract the share $s_{i}$\ from the encrypted value $E_{i}=s_{i}\oplus
(g^{a})^{s_{i}}$. This party has just to compute $s_{i}=E_{i}\oplus \lbrack
(g^{s_{i}})^{a}]$. 

For a dishonest party knowing only $E_{i}$, $g^{a}$\ and $%
g^{s_{i}}$, breaking the encryption of the shares means computing $%
(g^{a})^{s_{i}}$\ from the public value $g^{s_{i}}$\ and the public key $%
g^{a}$\ which is infeasible under the Computational Diffie-Hellman
assumption.
\end{proof}

In this section, we proved that neither the dealer can cheat by distributing
invalid shares nor a dishonest participant can cheat by claiming that the
share he received is not valid while it was. Moreover, we proved that under
the Computational Diffie-Hellman assumption, no one can break the encryption
of the shares neither in the distribution process, nor in the disputation
process or in the reconstruction phase. We proved also that, under the
Computational Diffie-Hellman assumption, an unauthorized party cannot
pretend to be a shareholder possessing a valid share.

In the following section, we compare our new PVSS scheme to the PVSS schemes
presented in section 2.

\section{Comparison with previous PVSS schemes}

In this section, in order to compare our PVSS scheme to the existent PVSS
schemes, we first present the different security properties of the most
known schemes. We point that the schemes proposed in [12] and [18] do not
appear in this section because we consider that these schemes have a
specific context\footnote{
process fast or delayed, and the scheme proposed in [21] focused on how to
make a new member join the scheme without exposing the secret and the old
shares.}. However, we include the scheme of Feldman [6] in this comparison
since we consider that it is the first PVSS scheme, although public
verifiability was not defined yet when this scheme was proposed. So, for each studied PVSS scheme, we identify the cryptographic techniques it uses in every process (distribution, verification\ldots ) and we verify
if they satisfy our definition of security. Since most of the used
cryptographic techniques are based on some hard problems, we classify these
hard problems into four classes:

\begin{itemize}
\item Discrete Logarithms: Hard problems based on the Discrete Logarithm
Problem.
\item Factoring : Hard problems based on the Factorization Problem.
\item Paillier's cryptosystem: Hard problems based on the Paillier's
cryptosystem proof.
\item Pairings: Hard problems based on the Bilinear Pairings.
\end{itemize}

As we said before, a comparison is done for every process of PVSS
schemes. For the distribution process, we study the security assumptions (DLP%
\footnote{%
Discrete Logarithm Problem.}, CDH\footnote{%
Computational Diffie-Hellman Problem.}, ...) of the encryption functions
used to encrypt the shares before distributing them among the set of
participants. Then, we evaluate the problem on which the security of the
process is based. The evaluation is based on the following reduction:
ELGamal $\leq _{P}$ CDH $\leq _{P}$ DLP.

However, for the scheme of Feldman and the scheme of Young and Yung, this
evaluation is infeasible, because the cryptographic techniques used in these
schemes are not specified. For more details, see table1.

\begin{table*}[h] \centering%
\begin{tabular}{|p{3cm}||p{4cm}||p{2.5cm}||p{2.5cm}|}
\hline
\multicolumn{4}{|c|}{\textbf{Encryption and distribution of shares}} \\ 
\hline
\textbf{Category} & \textbf{PVSS Scheme} & \textbf{Problem} & \textbf{%
Evaluation} \\ \hline
\multirow{7}{*}{Discrete Log} & Stadler (1996) & ELGamal cryptosystem & Hard \\ 
\cline{2-4}
& Schoenmakers (1999) & DLP & Very hard \\ \cline{2-4}
& Behnad \& Eghlidos (2008) & CDH & Hard \\ \cline{2-4}
& Heidarvand \&Villar (2009) & DLP & Very hard \\ \cline{2-4}
& Jhanwar (2010) & DLP & Very hard \\ \cline{2-4}
& Ben Shil, Blibech \& Robbana (2011) & CDH & Hard \\ \cline{2-4}
& Our PVSS (2012) & CDH & Hard \\ \hline
\multirow{1}{*}{Factoring} & Okamoto \& Fujisaki (1998) & Modified RSA assumption
& Non Proved \\ \hline
Paillier cryptosystem & Ruiz \& Villar (2005) & Paillier
probabilistic encryption scheme & Hard \\ \hline
\multirow{1}{*}{Pairings} & Wu \& Tseng (2011) & BDH & Hard \\ \hline
\multirow{2}{3cm}{Non specified problem} & Feldman (1987) & No encryption function & 
- \\ \cline{2-4}
& Young \& Yung (2001) & Public key encryption algorithm & - \\ \hline
\end{tabular}%
\caption{Evaluation of the distribution process}\label{Evaluation of the
distribution process}%
\end{table*}%
For the verification process, we explicit also the problem on which the
security of the verification process is based. This evaluation is based on
the following reductions:

\begin{itemize}
\item ELGamal $\leq _{P}$ CDH $\leq _{P}$ DLP.
\item RSA $\leq _{P}$Factoring.
\end{itemize}

We also classify the verification process into two classes: interactive
verification and non-interactive verification. The verification is
interactive if the verifier has to communicate with other participants
and/or with the dealer to verify the validity of a share. It is
non-interactive if the verifier can verify the validity of a share without
any communication with other participants or with the dealer. Obviously,
non-interactivity is preferred in order to reduce communications. For more
details, see table 4.

\begin{table*}[tbp] \centering%
\begin{tabular}{|p{2.5cm}||p{3cm}||p{1.5cm}||p{1.7cm}||p{1.6cm}||p{1.7cm}|}
\hline
\multicolumn{6}{|c|}{\textbf{Verification of shares}} \\ \hline
\textbf{Category} & \textbf{PVSS scheme} & \textbf{Problem} & \textbf{%
Evaluation} & \textbf{Proof} & \textbf{Evaluation} \\ \hline
\multirow{8}{2.5cm}{Discrete Log} & Feldman (1987) & DLP & Very hard & Non-interactive
& Standard Model \\ \cline{2-4}\cline{2-6}
& \multirow{2}{2.5cm}{Stadler (1996)} & \multirow{2}{*}{DDLP} & \multirow{2}{*}{Non proved} & 
Interactive & Zero-Knowledge \\ \cline{5-6}
&  &  &  & Non-interactive & Random Oracle Model \\ \cline{2-4}\cline{2-6}
& \multirow{2}{3cm}{Schoenmakers (1999)} & \multirow{2}{*}{DDH} & \multirow{2}{*}{Hard} & 
Interactive & Zero-Knowledge \\ \cline{5-6}
&  &  &  & Non-interactive & Random Oracle Model \\ \cline{2-4}\cline{2-6}
& \multirow{2}{3cm}{Young \& Yung (2001)} & \multirow{2}{*}{DLP} & \multirow{2}{*}{Very hard} & 
Interactive & Zero-Knowledge \\ \cline{5-6}
&  &  &  & Non-interactive & Random Oracle Model \\ \cline{2-4}\cline{2-6}
& Behnad \& Eghlidos (2008) & DLP & Very hard & Non-interactive & Standard
Model \\ \cline{2-6}
& Wu \& Tseng (2011) & CDH & Hard & Non-interactive & Random Oracle Model \\ 
\cline{2-6}
& Ben Shil, Blibech \& Robbana (2011) & DLP & Very hard & Non-interactive & 
Standard Model \\ \cline{2-6}
& Our PVSS (2012) & DLP & Very hard & Non-interactive & Standard Model \\ 
\hline
\multirow{2}{2.5cm}{Factoring} & \multirow{2}{3cm}{Okamoto \& Fujisaki (1998)} & Factoring & 
Very hard & Interactive & Zero-Knowledge \\ \cline{3-6}
&  & RSA & Hard & Interactive & Zero-Knowledge \\ \cline{1-4}\cline{2-6}
Paillier cryptosystem & Ruiz \& Villar (2005) & DCRA & Hard & 
Non-interactive & Random Oracle Model \\ \hline
\multirow{2}{2.5cm}{Pairings} & Heidarvand \& Villar (2009) & DBSDH & Hard & 
Non-interactive & Standard Model \\ \cline{2-6}
& Jhanwar (2010) & MSEDH & Hard & Non-interactive & Standard Model \\ \hline
\end{tabular}%
\caption{Evaluation of the verification process}\label{Evaluation of the
verification process}%
\end{table*}%

After the verification process, a participant can initiate a disputation
process to complain about the validity of the share he received. The
disputation process aims to verify if the dealer is honest. We say that this
process is explicit if it leads the dealer to send the share to the
participant who complains in the presence of a third party. This latter has
to identify who among the dealer and the participant is lying. Otherwise,
the disputation process is supposed to be implicit (the dealer is considered
as dishonest if the number of participants complaining about the validity of
their shares is greater than a given parameter). Notice that only the three
schemes of table 2 offer an explicit disputation process. The security
assumptions of this process for these schemes are studied in table 3.

\begin{table*}[tbp] \centering%
\begin{tabular}{|p{1.9cm}||p{3cm}||p{1.5cm}||p{1.7cm}||p{1.7cm}|}
\hline
\multicolumn{5}{|c|}{\textbf{Disputation}} \\ \hline
\textbf{Category} & \textbf{PVSS scheme} & \textbf{Problem} & \textbf{%
Evaluation} & \textbf{Proof} \\ \hline
\multirow{1}{*}{Discrete Log} & Behnad \& Eghlidos (2008) & CDH & Hard & 
Interactive \\ \cline{2-5}
& Ben Shil, Blibech \& Robbana (2011) & CDH & Hard & Interactive \\ 
\cline{2-5}
& Our PVSS (2012) & CDH & Hard & Interactive \\ \hline
\end{tabular}%
\caption{Evaluation of the disputation process}\label{Evaluation of the
disputation process}%
\end{table*}%
The membership proof can be implicit (a participant has to give his part, in
the reconstruction process, to prove that he is an authorized participant)
or explicit (a participant can prove to a verifier that he is an authorized
participant possessing a valid share without revealing this share). When
this process is explicit, it can be interactive or non-interactive. In table
3, we focus on PVSS schemes with explicit membership proof process and study
the interactivity of each process and its security assumptions.

\begin{table*}[tbp] \centering%
\begin{tabular}{|p{1.9cm}||p{3.2cm}||p{1.5cm}||p{1.7cm}||p{1.5cm}||p{1.7cm}|}
\hline
\multicolumn{6}{|c|}{\textbf{Membership proof}} \\ \hline
\textbf{Category} & \textbf{PVSS scheme} & \textbf{Problem} & \textbf{%
Evaluation} & \textbf{Proof} & \textbf{Evaluation} \\ \hline
\multirow{4}{*}{Discrete Log} & \multirow{2}{*}{Schoenmakers (1999)} & \multirow{2}{*}{DDH} & %
\multirow{2}{*}{Hard} & Interactive & Zero-Knowledge \\ \cline{5-6}
&  &  &  & Non-interactive & Random Oracle Model \\ \cline{2-6}\cline{5-6}
& Behnad \& Eghlidos (2008) & CDH & Hard & Interactive & Zero-Knowledge \\ 
\cline{2-6}
& Ben Shil, Blibech \& Robbana (2011) & CDH & Hard & Interactive & 
Zero-Knowledge \\ \cline{2-6}
& Our PVSS (2012) & CDH & Hard & Interactive & Zero-Knowledge \\ \hline
\multirow{1}{*}{Pairings} & Heidarvand \&Villar (2009) & DBSDH & Hard & 
Non-interactive & Standard Model \\ \hline
\end{tabular}%
\caption{Evaluation of the membership proof}\label{Evaluation of the
membership proof}%
\end{table*}%
To summarize, we provide in this paper a new PVSS scheme having the
following properties:

First, during the distribution process, our scheme uses a simple encryption
function to encrypt the shares before distributing them. The encryption of
the shares is secure under the CDH assumption.

When he receives a share of the secret, a participant can extract and verify
the validity of his share without any communication with any party, even the
dealer. We say that our verification process is non-interactive.

In case of any complaint against the dealer, the concerned participant, the
dealer and a third party $R$ can run a disputation process in order to
establish who is cheating. The disputation process is secure under the CDH
assumption.

Later, an explicit Zero-Knowledge membership process can be run to allow
every participant to prove interactively his membership to a verifier who
asked for that.\textbf{\ }This process is secure under the CDH assumption. Notice here that only three schemes offer an explicit membership proof and an explicit disputation process at the same time: the present scheme, the
scheme of Behnad and Eghlidos [2] and the scheme of Ben Shil, Blibech and
Robbana [17].

Moreover, notice that in our scheme, when submitting an encrypted share to
the party concerned by computing the secret, an implicit membership proof is
given and it is not necessary to run the explicit membership only proof.

Finally, we point that the use of the XOR operator in our scheme makes it
less timeconsuming than the schemes presented in [2] and [17].

\section{Conclusion}

The new PVSS scheme proposed in this paper is very simple while being
secure. In fact, thanks to the use of a simple encryption function, we
reduce computations in all the processes of the scheme. In addition, like in
the scheme proposed in [2] we added two new processes: a disputation
process and a membership proof process. Thanks to these processes, no one
can cheat.

\nocite{*}
\bibliographystyle{eptcs}
\bibliography{generic}
\end{document}